%
%
%
\documentclass{amsproc}

\newtheorem{theorem}{Theorem}[section]

\newtheorem{proposition}[theorem]{Proposition}

\theoremstyle{definition}

\newtheorem{example}[theorem]{Example}

\theoremstyle{remark}

\newtheorem{remarks}[theorem]{Remarks}

\numberwithin{equation}{section}



\begin{document}

\title{Momentum operators on graphs}

\author{Pavel Exner}
\address{Department of Theoretical Physics, Nuclear Physics
Institute, Czech Academy of Sciences, 25068 \v{R}e\v{z} near Prague, and
Doppler Institute for Mathematical Physics and Applied
Mathematics, Czech Technical University, B\v{r}ehov\'{a}~7, 11519
Prague, Czechia}
\email{exner@ujf.cas.cz}
\thanks{The research was supported by the Czech Science Foundation within the project P203/11/0701. I am indebted to Peter Kuchment for a discussion on his book in preparation and to the referee for reading the manuscript carefully.}

\subjclass{Primary 54C40, 14E20; Secondary 46E25, 20C20}
\date{January 1, 1994 and, in revised form, June 22, 1994.}

\dedicatory{With congratulations to my friend Fritz Gesztesy on the occasion of his 60th birthday.}

\keywords{Differential geometry, algebraic geometry}

\begin{abstract}
We discuss ways in which momentum operators can be introduced on an oriented metric graph. A necessary condition appears to the balanced property, or a matching between the numbers of incoming and outgoing edges; we show that a graph without an orientation, locally finite and at most countably infinite, can made balanced oriented \emph{iff} the degree of each vertex is even. On such graphs we construct families of momentum operators; we analyze their spectra and associated unitary groups. We also show that the unique continuation principle does not hold here.
\end{abstract}

\maketitle

\section{Introduction}

Our writings have their own fates once they left our hands and one can only guess how successful they will eventually be. Fritz bibliography is extensive, some two hundred items if not more, and covers many areas. All his texts are pleasure to read, deep in contents and perfectly organized. Nevertheless, one of them made much larger impact than any others, namely the monograph \cite{AGHH} first published in 1988. It is a collective work but Fritz hand in unmistakably present in the exposition, and I add that it makes me proud to be a part of the second edition.

Thinking about how the book resonated in the community, one has to come to the conclusion that that it did not happen by a chance, rather the subject of solvability struck some important needs. Such models are complex and versatile enough to be applicable to numerous physical situations and at the same time, they are mathematically accessible and allow derive conclusions without involving a heavy machinery. This concerns both the proper subject of the book \cite{AGHH} and its various extensions, among them the theory of quantum graphs which is developing rapidly; a broad overview of recent developments can be found in \cite{EKKST}.

This is the frame into which the present little \emph{\'etude} belongs. While from the quantum mechanical point of view the operators most frequently studied are Hamiltonians which are, mathematically speaking, typically Laplacians, their singular perturbations and modifications, there are other observables of interest too. In this paper we deal with momentum operators acting locally as imaginary multiples of the first derivative in the appropriate functional spaces.

This is not to say that first-order operators on graphs have not been studied before. On the one hand, momentum-type operators on graphs have been discussed recently in \cite{FKW} from the viewpoint of appropriate index theorems, see also \cite{BK} where their use in quantum-chaos studies based on bond scattering matrices is described. Momentum operators are considered also in \cite{Ca, Eg}, see \cite{ES10} for related analysis of Berry-Keating operator. On the other hand, there is a recent series of voluminous papers treating such operators on two or a larger number of intervals \cite{JPT11, JPT12a, JPT12b}; they come from a different literature background, only weakly connected to the applications of self-adjoint extensions in modeling quantum dynamics such as those mentioned above or collected in \cite{AEG}.

Our discussion in this paper necessarily overlaps in part with the indicated studies looking at the problem from a bit different perspective. In contrast to graph Laplacians definition of momentum operators require the graph to be oriented, not just arbitrarily but in a \emph{balanced} way, with matching numbers of incoming and outgoing edges in each vertex. We show that an undirected graph, locally finite and at most countably infinite, can be given such orientations, in general different ones, if and only if the degrees of its vertices are even, and describe possible momentum operators. We derive some properties of their spectra, describe unitary groups associated with them, and present an example showing that the unique continuation principle is not valid here.

\section{Preliminaries}

Let us first introduce some notions we will need. We consider a graph $\Gamma$ consisting of a family of \emph{vertices}, $\mathcal{V}=\{v_j:\: j\in I^v\}$ indexed by a set $I\equiv I^v$, and a family $\mathcal{E}$ of edges\footnote{The graph constituent symbols can be labelled by $\Gamma$ but we mostly refrain from doing that.}. The latter includes \emph{finite} (or \emph{internal}) edges, $\mathcal{E}_\mathrm{fin}= \{e_k:\: k\in I^e_\mathrm{fin}\}$ and \emph{semi-infinite} edges (alternatively \emph{external} ones or \emph{leads}), $\mathcal{E}_\infty = \{h_k:\: k\in I^e_\infty\}$. If each pair of vertices is connected by at most one edge, we can identify $I^e_\mathrm{fin}$ with a subset of $I\times I$ specified the adjacency matrix, and if no more than one lead is attached to each vertex, $I^e_\infty$ can be identified with a subset of $I^v$. Note that a graph can be  always modified to satisfy these requirements by inserting dummy vertices to the ``superfluous'' edges, however, we will not need these assumptions.

The graphs we are going to consider are \emph{metric} and \emph{oriented}. The first notion means that each finite edge $e_k$ can be identified with a line segment $[0,l_k]$ and a semi-infinite one with a halfline. The second one says that the orientation of the edge parametrization is not arbitrary; each finite edge $e_k$ has its \emph{starting point} associated with $x=0$ and \emph{endpoint} to which $x=l_k$ corresponds. In the standard terminology of graph theory oriented graph form a subclass among the \emph{directed} ones in which each edge has a single orientation. In contrast to the usual graph theory, our graphs can also have semi-infinite edges and we have to take account of them. We divide $\mathcal{E}_\infty$ into the family $\mathcal{E}_\infty^+ = \{h_k:\, k\in I^e_{\infty,+} \} $ of \emph{outgoing} edges parametrized by $[0,\infty)$, with zero referring to the vertex, and $\mathcal{E}_\infty^-$ consisting of \emph{incoming} ones, parametrized by $(-\infty,0]$.

We can thus count edges meeting at a given vertex $v_j$: there are $d_j^{\mathrm{fin},\pm}$ finite edges starting and ending there, respectively, together $d_j^{\infty,+}$ outgoing and $d_j^{\infty,-}$ incoming leads. The vertex degree $d_j$ is the sum of those four numbers which can be split into the incoming and outgoing part, $d_j^\pm:= d_j^{\mathrm{fin},\pm} + d_j^{\infty,\pm}$. We also introduce $d_j^{\mathrm{fin}} := d_j^{\mathrm{fin},+} +d_j^{\mathrm{fin},-}$ and $d_j^{\infty} := d_j^{\infty,+} +d_j^{\infty,-}$ and set
\begin{equation}
N_\mathrm{fin}:= \sum_{j\in I} d_j^{\mathrm{fin},+} = \sum_{j\in I} d_j^{\mathrm{fin},-}\,, \quad N_\infty^\pm:= \sum_{j\in I} d_j^{\infty,\pm}\,, \quad N_\infty = N_\infty^+ + N_\infty^- \,.
\end{equation}
In all the paper we suppose that $\Gamma$ is either \emph{locally finite} and \emph{at most countably infinite} meaning that $I$ is at most countable and
\begin{equation} \label{finitedeg}
\exists\, c_d>0\::\;\; d_j\le c_d \quad \mathrm{for} \;\; \forall j\in I\,.
\end{equation}
This includes several categories, in particular, \emph{finite} graphs having $\sharp I<\infty$ and $N_\infty=0$, further \emph{finite-core} graphs with $\sharp I<\infty$ and $N_\infty>0$, and finally \emph{properly countably infinite} meaning that $I$ is countable. In the latter case the values of $N_\mathrm{fin}$ and $N_\infty$ can be zero, a finite number, or infinite, in all possible combinations. In case of infinite graphs we shall also suppose that.
\begin{equation} \label{minlength}
\inf_{j\in I} l_j > 0\,.
\end{equation}
The graph will be called \emph{balanced}, or alternatively \emph{balanced oriented}, provided
\begin{equation}
d_j^{\mathrm{fin},+} = d_j^{\mathrm{fin},-} \;\; \mathrm{and} \quad d_j^{\infty,+} = d_j^{\infty,-} \quad \mathrm{for} \;\; \forall j\in I\,;
\end{equation}
the said notion is again common in oriented graph theory and we remark that our definition is consistent with the standard terminology since one can amend our graphs having $N_\infty^\pm\ne 0$ with a vertex at infinity where the outgoing semi-infinite edges ``end'' and the incoming ones ``start''.

\section{Balanced orientability} \label{s: balanced}

In physical models where graphs are employed to describe spatial structures and motion of particles or fields on them usually no orientation is prescribed. Before proceeding further let us thus ask whether and how an undirected graph can be given orientation. The question has to be made more precise because one can always put arrows to such a graph edges; in graph theory one usually asks about existence of oriented paths connecting any pair of graph vertices. In contrast to that for us the local balance will matter; we call the a metric graph \emph{balanced orientable} if one can parametrize its edges in a way which makes it an oriented and balanced graph.
\begin{theorem}
An undirected graph $\Gamma$ satisfying the above countability assumption and (\ref{finitedeg}) is balanced orientable \emph{iff} the degree of any vertex $v\in\mathcal{V}_\Gamma$ is even.
\end{theorem}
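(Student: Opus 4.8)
The \emph{only if} direction is immediate and I would settle it first: in a balanced oriented graph every vertex satisfies $d_j=d_j^++d_j^-=2\,d_j^+$, so its degree is even. The substance is the converse, and the plan is to build the required parametrization by a single combinatorial construction valid word for word for finite, finite-core and properly countably infinite graphs, thereby bypassing the classical Eulerian-circuit theorem; that theorem handles the finite case (and the finite-core one after attaching a vertex at infinity to the leads), but in the properly infinite situation a naive exhaustion by finite subgraphs fails because the truncations acquire odd-degree vertices along the cut.

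So assume every vertex has even degree. By (\ref{finitedeg}) only finitely many \emph{edge-ends}---by which I mean the end of a finite edge at one of its two endpoints, or the vertex end of a lead---meet at a given $v_j$, and their number is the even integer $d_j$. The first step is to fix at each $v_j$ an arbitrary partition of these $d_j$ edge-ends into $d_j/2$ unordered pairs; since $I$ is countable and each partition is chosen from a finite nonempty set, this simultaneous choice requires no form of the axiom of choice. Following the pairs from one edge to the next decomposes $\mathcal{E}$ into pairwise disjoint \emph{threads}: each edge-end lies in exactly one pair, so ``continue along the thread'' is a well-defined non-branching rule and the threads exhaust $\mathcal{E}$. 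Since every edge-end at an ordinary vertex is matched, a thread can terminate only at a lead, and then at one of its two ends; running through the cases, each thread is a finite cycle, a bi-infinite path, a one-sided infinite path starting at a lead, or a finite path joining two leads, and when $N_\infty=0$ only the first type survives, which is Euler's decomposition of an even graph into cycles.

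Next I would orient each thread consistently: pick a direction of traversal, let each finite edge of the thread inherit its orientation from that direction, and call a lead at the initial end of the thread incoming and one at the terminal end outgoing (a cyclic thread is oriented around the cycle and has no ends). It remains to check that the resulting oriented graph is balanced, i.e.\ that the numbers of incoming and outgoing edges coincide at each vertex, and here the pairing does the work: fix $v_j$ and one of its $d_j/2$ pairs of edge-ends; the thread through that pair enters $v_j$ along one of the two edges and leaves along the other, so the pair contributes one unit to $d_j^-$ and one to $d_j^+$, and summing over the $d_j/2$ pairs gives $d_j^+=d_j^-=d_j/2$. If one moreover wants the finer equalities $d_j^{\mathrm{fin},+}=d_j^{\mathrm{fin},-}$ and $d_j^{\infty,+}=d_j^{\infty,-}$, then at vertices carrying an even number of leads one should pair leads only with leads and finite edge-ends only with finite edge-ends, which is possible there and keeps the two counts separately balanced.

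The step I expect to need the most care is not the local count but the global coherence of the thread decomposition together with the bookkeeping of the leads: that ``follow the pairing'' genuinely partitions $\mathcal{E}$ even when some threads are infinite, that the list of four thread shapes is exhaustive, and that the incoming/outgoing tags placed on leads along a thread register in $d_j^{\infty,-}$ and $d_j^{\infty,+}$ as the definitions demand. The local-finiteness and countability hypotheses enter only here, to make every vertex pairing a choice from a finite set and the whole family of choices a countable one.
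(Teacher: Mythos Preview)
Your proof is correct and rests on the same idea as the paper's: decompose the edge set into non-branching ``free paths'' by a local pairing of edge-ends at each vertex, then orient each such path consistently. The packaging differs only in that the paper proceeds \emph{dynamically}---enumerate the vertices, grow one path at a time forward and backward from the lowest-indexed remaining vertex by picking an unused edge at each step, delete the completed path, and repeat---whereas you fix the complete pairing at every vertex in advance and let the threads fall out automatically. Your static version avoids the inductive deletion-and-exhaustion bookkeeping and makes the four thread shapes explicit; the paper's version is closer in spirit to the classical Hierholzer construction of Eulerian trails. You also go a step beyond the paper by noting what is needed for the \emph{separate} equalities $d_j^{\mathrm{fin},+}=d_j^{\mathrm{fin},-}$ and $d_j^{\infty,+}=d_j^{\infty,-}$ that the paper builds into its definition of ``balanced'': an even number of leads at each vertex, so that leads can be paired with leads. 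That extra hypothesis is not part of the theorem statement, and the paper's own construction does not secure it either; both proofs in fact deliver only $d_j^{+}=d_j^{-}$, which is what the momentum-operator application actually requires.
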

\begin{proof}
The necessary condition is obvious. To check the sufficient one we introduce some notions. A \emph{path} in an oriented graph is a family of subsequently adjacent edges such that at each vertex on it an incoming edge meets an outgoing one; the edges constituting a path can be parametrized by adjacent intervals of the real axis. A \emph{free path} is a path which can be followed in both directions without termination; it can be either \emph{infinite} parametrizable by $\mathbb{R}$ or a \emph{loop} consisting of a finite number of finite edges which brings one to the initial point.

By assumption the vertices $v\in\mathcal{V}_\Gamma$ can be numbered. Take $v_1\in\mathcal{V}_\Gamma$, pick two edges emanating from it and give them orientations making one of them incoming and the other outgoing. Follow the latter to the other end, choose one of the (even number of the) edges emanating from that vertex and make it outgoing, and do the same ``backwards'' with the chosen edge incoming at $v_1$. Proceeding in the same way we get either an infinite free path or a loop in case the forward and backward branch meet at a vertex; it is clear from the construction that the loop will have a definite orientation. We delete the constructed free path from $\Gamma$ obtaining a graph $\Gamma'$ which satisfies the assumptions of the theorem having $d'_1\le d_1-2$ and $d'_j \le d_j$ for $j\ge 2$. If $d'_1\ne 0$ we construct in $\Gamma'$ another free path through $v_1$, delete it from the graph, and continue until this vertex is eliminated entirely.

Keeping the original vertex numbering we proceed to the vertex with lowest index value in the ``reduced'' graph obtained in this way, construct a free path going through it, eliminate it from the graph, and go on with such path eliminations until this vertex is fully removed from the graph; continuing the procedure we exhaust after an at most countable number of steps all the vertices which reduces the undirected graph remainder to an empty set proving thus the result.
\end{proof}

\begin{remarks} \label{rmk:orientability}
(a) If $N^\mathrm{fin}<\infty$ the graph balanced orientability requires the number of semi-infinite edges to be even. This is seen well if we use the ``flower'' model \cite{Ku, EL} of $\Gamma$ with all the vertices put together. Since the degree of this ``grand vertex'' is the sum of degrees of the original vertices and every finite edge contributes to this quantity by an even number, and furthermore, the number of external edges has to be finite in view of (\ref{finitedeg}), it must be even. Another simple consequence of the above result is that a balanced orientable graph cannot have ``loose ends'', i.e. finite edges with a vertex of degree one.

(b) The simplest situation from the orientability point view occurs if the graph edges can be paired; this is the case, in particular, if one thinks of an undirected graph edge as of a pair of \emph{bonds} having opposite orientations. This is a useful trick, employed for instance in studies of quantum chaos on graphs \cite{KS}, allowing one to express properties of the original undirected graph in terms of ``bond scattering matrices'' -- cf.~\cite[Sec.~2.2]{BK}.

(c) The construction employed in the proof shows that a balanced oriented graph can be regarded as \emph{a union of free paths}, disjoint except for the vertices in which they intersect. In general, an undirected graph satisfying the assumptions of the theorem can be oriented in different ways. In case of a \emph{free chain graph} with all the vertices of degree two the ambiguity is trivial consisting of a choice of one the two possible orientations, while in case of nontrivial \emph{branchings} meaning existence of a $j\in I$ with $d_j>2$, the number of ways to orient $\Gamma$ is larger.

(d) If $N^\mathrm{fin}< \infty$ any infinite free paths begin and end with an external edge, provided these are present, which confirms the claim made above. On the other hand, graphs with $N^\mathrm{fin}=\infty$ can be balanced orientable even with an odd number of leads. A trivial example illustrating this claim is a graph $\Gamma$ isometric to the line with the vertices at the points $x=0,1,2,\dots$. A less trivial example is obtained if we replace the positive halfline by a tree graph with all the vertices of degree four. At each branching we have one edge to the left of the vertex and three to the right, of which we choose two of the same orientation as their left neighbor and one opposite. It is easy to see that such a graph with one semi-infinite edge is balanced orientable and it can be identified with a family of infinite free paths.

(e) Admissible orientations can be very different, in particular, if $\sharp I=\infty$. As an example, consider a square-lattice graph corresponding naturally to the subset $ \{(x,y):\: x\in\mathbb{R},\, y=j\ell,\, j\in\mathbb{Z}\} \cup \{(x,y):\: y\in\mathbb{R},\, x=i\ell,\, i\in\mathbb{Z}\}$ of $\mathbb{R}^2$ for a fixed $\ell>0$. One can orient it by making all the horizontal and vertical lines infinite free paths. An alternative is to regard the lattice as a checkerboard pattern and make free paths of the perimeters of all the ``black'' squares; it that case the oriented graph can be identified with the corresponding infinite family of loops. Furthermore, it is not difficult to find ways of orientation in which the two types of free paths, the infinite ones and the loops, are combined.

\end{remarks}

\section{Momentum operators}
\label{s:momentum}

After these preliminaries let us pose our main question. We will suppose that $\Gamma$ is the configuration space of a quantum system, namely that a spinless quantum particle lives on the graph, and ask whether one can define for it a momentum-type observable, that is, a self-adjoint operator which acts as $\psi_k \mapsto -i\psi'_k$ on the $k$th edge. The state Hilbert space of such a system will naturally be
\begin{equation}
L^2(\Gamma) := \bigoplus_{k\in I^e_\mathrm{fin}} L^2(0,l_k) \,\oplus \bigoplus_{k\in I^e_\infty} L^2(0,\infty)
\end{equation}
the elements of which we write as columns $\psi = \big( \{ \psi^\mathrm{fin}_{j,i}:\: j\in I,\, i=1,\dots,d_j^\mathrm{fin} \}, \, \psi^\infty_{j,i}:\: j\in I,\, i=1,\dots,d_j^\infty \} \big)^\mathrm{T}$, or $\psi = \big( \{ \psi^\mathrm{fin}_k:\: k\in I^e_\mathrm{fin} \}\,, \{ \psi^\infty_k:\: k\in I^e_\infty \} \big)^\mathrm{T}$ if we number the edges as we go, with the scalar product
\begin{equation}
(\phi, \psi) := \sum_{k\in I^e_\mathrm{fin}} \int_0^{l_k} |\psi^\mathrm{fin}_k(x)|^2 \mathrm{d}x + \sum_{k\in I^e_\infty} \int_0^{\infty} |\psi^\infty_k(x)|^2 \mathrm{d}x\,.
\end{equation}
As a starting point of the construction we take the operator $P_0:\, P_0\psi = -i\psi'$ with the domain $D(P_0)$ consisting of $W^{1,2}$ functions vanishing at the vertices, or more explicitly being equal to
\begin{equation}
\left\{\psi^\mathrm{fin}_k \in W^{1,2}(e_k): \psi^\mathrm{fin}_k(0)=\psi^\mathrm{fin}_k(l_k)=0 \right\} \cup \left\{\psi^\infty_k \in W^{1,2}(\mathbb{R}^+): \psi^\infty_k(0)=0 \right\}.
\end{equation}
It is straightforward to check that $P_0$ defined in this way is symmetric and it adjoint acts as $P^*_0\psi = -i\psi'$ on $D(P^*_0)= W^{1,2}(\Gamma \setminus \mathcal{V})$.

It is natural that candidates for the role of momentum operator are to be looked for among the self-adjoint extensions of $P_0$. To construct them we need the corresponding boundary form; an easy argument using integration by parts gives
\begin{eqnarray} \label{boundary-form}
\lefteqn{(P^*\phi,\psi) - (\phi,P^*\psi) = i(\phi^\mathrm{out}, \psi^\mathrm{out}) -i(\phi^\mathrm{in}, \psi^\mathrm{in}) } \nonumber \\ [.7em] && = i \sum_{k\in I^e_\mathrm{fin}} \left( \overline\phi_k^\mathrm{fin}(l_k-) \psi_k^\mathrm{fin}(l_k-) - \overline\phi_k^\mathrm{fin}(0+) \psi_k^\mathrm{fin}(0+) \right) \\ && -i \sum_{k\in {I^e_\infty,+}}  \overline\phi_k^\infty(0+) \psi_k^\infty(0+) +i \sum_{k\in I^e_{\infty,-}} \overline\phi_k^\infty(0-) \psi_k^\infty(0-) \nonumber
\end{eqnarray}
for any $\phi,\psi \in D(P^*_0)$, where we have employed the following shorthands,
\begin{eqnarray*}
&& \psi^\mathrm{out}:= \left( \psi_1^\mathrm{fin}(0+), \dots, \psi_{N_\mathrm{fin}}^\mathrm{fin}(0+), \psi_1^\infty(0+), \dots, \psi_{N_\infty^+}^\infty(0+) \right)^\mathrm{T}, \\ &&
\psi^\mathrm{in}:= \left( \psi_1^\mathrm{fin}(l_1-), \dots, \psi_{N_\mathrm{fin}}^\mathrm{fin}(l_{N_\mathrm{fin}}-), \psi_1^\infty(0-), \dots, \psi_{N_\infty^-}^\infty(0-) \right)^\mathrm{T}.
\end{eqnarray*}

 \begin{proposition}
Self-adjoint extensions of $P_0$ always exist provided \mbox{$N_\mathrm{fin}=\infty$;} in case $N_\mathrm{fin}<\infty$ they exist \emph{iff} $N^+_\infty = N^-_\infty$. They are characterized by the condition $\psi^\mathrm{out} = U \psi^\mathrm{in}$ where $U$ is a unitary operator of dimension $N_\mathrm{fin}+N^+_\infty$.
 \end{proposition}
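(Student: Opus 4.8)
The plan is to treat $P_{0}^{*}$ together with the boundary map $\psi\mapsto(\psi^{\mathrm{out}},\psi^{\mathrm{in}})$ as a boundary pair and to read off both the existence criterion and the parametrization from the form~(\ref{boundary-form}). Write $\mathcal{G}^{\mathrm{out}}$ for the Hilbert space of families indexed by the finite edges together with the outgoing leads (so of dimension $N_{\mathrm{fin}}+N_{\infty}^{+}$), and $\mathcal{G}^{\mathrm{in}}$ for the analogous space of dimension $N_{\mathrm{fin}}+N_{\infty}^{-}$; in the infinite case these are copies of $\ell^{2}$. The first step is to check that $\gamma\colon D(P_{0}^{*})\to\mathcal{G}^{\mathrm{out}}\oplus\mathcal{G}^{\mathrm{in}}$, $\gamma\psi=(\psi^{\mathrm{out}},\psi^{\mathrm{in}})$, is well defined, bounded in the graph norm, and \emph{onto}. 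Boundedness rests on the one-dimensional Sobolev estimate $\sup_{x\in e_{k}}\abs{\psi_{k}(x)}^{2}\le c\,\|\psi_{k}\|_{W^{1,2}(e_{k})}^{2}$ with a constant $c$ uniform in $k$: for a finite edge one integrates $(\abs{\psi_{k}}^{2})'$ and averages the base point over a subinterval of the fixed length $\inf_{j}l_{j}>0$ (this lower bound is (\ref{minlength}) in the infinite case and automatic otherwise), while on a lead one uses that $W^{1,2}(\mathbb{R}^{+})$-functions vanish at infinity; summing over $k$ gives $\psi^{\mathrm{out}},\psi^{\mathrm{in}}\in\ell^{2}$. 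Surjectivity follows from an explicit edgewise construction: given $(a,b)$ put $\psi_{k}(x)=a_{k}\,\eta(x/\delta)+b_{k}\,\eta((l_{k}-x)/\delta)$ on each finite edge, with a fixed cut-off $\eta$, $\eta(0)=1$, $\mathrm{supp}\,\eta\subset[0,1]$, $\delta<\tfrac12\inf_{j}l_{j}$, and analogously on the leads, so that $\|\psi_{k}\|_{W^{1,2}}^{2}\le c'(\abs{a_{k}}^{2}+\abs{b_{k}}^{2})$ uniformly and hence $\psi\in D(P_{0}^{*})$ with $\gamma\psi=(a,b)$. Finally $\ker\gamma=D(P_{0})$ by the very definition of $D(P_{0})$.

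With this in hand, rewrite~(\ref{boundary-form}) as $(P_{0}^{*}\phi,\psi)-(\phi,P_{0}^{*}\psi)=\omega(\gamma\phi,\gamma\psi)$, where $\omega$ is the Hermitian form $\omega((a,b),(a',b'))=i(a,a')_{\mathcal{G}^{\mathrm{out}}}-i(b,b')_{\mathcal{G}^{\mathrm{in}}}$ on $\mathcal{G}:=\mathcal{G}^{\mathrm{out}}\oplus\mathcal{G}^{\mathrm{in}}$. Because $\gamma$ is onto with kernel $D(P_{0})$, every operator $P$ with $P_{0}\subset P\subset P_{0}^{*}$ equals the restriction $P_{\mathcal{L}}$ of $P_{0}^{*}$ to $\gamma^{-1}(\mathcal{L})$ for $\mathcal{L}=\gamma(D(P))$, and the standard computation with the displayed identity gives $(P_{\mathcal{L}})^{*}=P_{\mathcal{L}^{[\perp]}}$, where $\mathcal{L}^{[\perp]}$ is the $\omega$-orthogonal companion of $\mathcal{L}$. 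Hence the self-adjoint extensions of $P_{0}$ are exactly the $P_{\mathcal{L}}$ with $\mathcal{L}=\mathcal{L}^{[\perp]}$, i.e. with $\mathcal{L}$ a Lagrangian subspace of $(\mathcal{G},\omega)$, and it remains to describe these.

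By polarization over $\mathbb{C}$, isotropy of $\mathcal{L}$ is equivalent to $\|\psi^{\mathrm{out}}\|=\|\psi^{\mathrm{in}}\|$ for all $(\psi^{\mathrm{out}},\psi^{\mathrm{in}})\in\mathcal{L}$; in particular $(\psi^{\mathrm{out}},0)\in\mathcal{L}$ forces $\psi^{\mathrm{out}}=0$, so $\mathcal{L}$ is the graph of an isometry $U_{0}$ from a subspace $\mathcal{D}\subset\mathcal{G}^{\mathrm{in}}$ (closed, since $\mathcal{L}=\mathcal{L}^{[\perp]}$ is) into $\mathcal{G}^{\mathrm{out}}$. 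A short computation of $\mathcal{L}^{[\perp]}$ then shows that the maximality requirement $\mathcal{L}=\mathcal{L}^{[\perp]}$ forces $\mathcal{D}=\mathcal{G}^{\mathrm{in}}$ and $\mathrm{ran}\,U_{0}=\mathcal{G}^{\mathrm{out}}$, i.e. $U_{0}$ is a unitary $U\colon\mathcal{G}^{\mathrm{in}}\to\mathcal{G}^{\mathrm{out}}$, while conversely the graph of any such unitary is Lagrangian. Therefore the self-adjoint extensions are precisely the restrictions of $P_{0}^{*}$ to $\{\psi\in D(P_{0}^{*})\colon\psi^{\mathrm{out}}=U\psi^{\mathrm{in}}\}$ with $U$ unitary between $\mathcal{G}^{\mathrm{in}}$ and $\mathcal{G}^{\mathrm{out}}$, and such a $U$ exists iff $\dim\mathcal{G}^{\mathrm{out}}=\dim\mathcal{G}^{\mathrm{in}}$, that is $N_{\mathrm{fin}}+N_{\infty}^{+}=N_{\mathrm{fin}}+N_{\infty}^{-}$: if $N_{\mathrm{fin}}=\infty$ both sides equal $\aleph_{0}$ and the condition is automatic, while for $N_{\mathrm{fin}}<\infty$ it reduces to $N_{\infty}^{+}=N_{\infty}^{-}$; in either case the common dimension is $N_{\mathrm{fin}}+N_{\infty}^{+}$. (As an independent check of the existence part one may compute the deficiency indices of $P_{0}$ edgewise --- $(1,1)$ for each finite edge, $(1,0)$ for each outgoing lead, $(0,1)$ for each incoming lead --- obtaining $(N_{\mathrm{fin}}+N_{\infty}^{+},\,N_{\mathrm{fin}}+N_{\infty}^{-})$ and the same criterion via von Neumann's theorem.)

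The step I expect to be the real obstacle --- as opposed to mere bookkeeping --- is the first one: proving that $\gamma$ maps $D(P_{0}^{*})$ \emph{onto} $\ell^{2}\oplus\ell^{2}$ with constants uniform over the edges. This is exactly where the hypothesis $\inf_{j}l_{j}>0$ is indispensable, since the crude bound $\abs{\psi_{k}(x)-\psi_{k}(y)}^{2}\le l_{k}\|\psi_{k}'\|^{2}$ degenerates both for very short and for very long edges and must be replaced by the averaging argument indicated above; once the boundary pair $(\gamma,\mathcal{G}^{\mathrm{out}},\mathcal{G}^{\mathrm{in}})$ is in place, the remainder is the routine abstract manipulation of the boundary form.
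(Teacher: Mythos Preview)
Your argument is correct and more carefully worked out than the paper's, which is a two-line sketch: compute the deficiency indices of $P_0$ edgewise as $(N_\mathrm{fin}+N_\infty^+,\,N_\mathrm{fin}+N_\infty^-)$ to settle existence via von Neumann, and then read off the parametrization $\psi^{\mathrm{out}}=U\psi^{\mathrm{in}}$ directly from the vanishing of the boundary form~(\ref{boundary-form}). You instead build a full boundary-pair framework, reducing the problem to classifying Lagrangian subspaces of $(\mathcal{G}^{\mathrm{out}}\oplus\mathcal{G}^{\mathrm{in}},\omega)$ and identifying these with graphs of unitaries; existence and parametrization then fall out of a single dimension comparison. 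The gain of your route is that it handles the infinite-edge case with explicit care --- you actually verify that the trace map lands in $\ell^2$ and is onto, using (\ref{minlength}) for uniform Sobolev constants, whereas the paper tacitly treats this as straightforward. The gain of the paper's route is brevity: once one accepts the edgewise deficiency-index count and the (standard) fact that vanishing of an indefinite form $i\|a\|^2-i\|b\|^2$ on a maximal subspace forces a unitary graph, nothing more needs to be said. Your final parenthetical remark recovering the deficiency indices is exactly the paper's argument, so the two approaches are fully compatible.
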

 \begin{proof}
It is straightforward to check that the deficiency indices of $P_0$ are $(N_\mathrm{fin}+N^+_\infty, N_\mathrm{fin}+N^-_\infty)$ and that the boundary form \eqref{boundary-form} vanishes \emph{iff} the boundary values satisfy $\psi^\mathrm{out} = U \psi^\mathrm{in}$ for any $\psi\in D(P^*_0)$.
 \end{proof}

However, not every self-adjoint extension $P_U$ of $P_0$ is a good candidate for the operators we are interested in. We have to add the requirement of \emph{locality} which means that the boundary conditions couple the boundary values
\begin{eqnarray*}
&& \psi_j^\mathrm{out}:= \left( \psi_{j,1}^\mathrm{fin}(0+), \dots, \psi^\mathrm{fin}_{j,d_j^{\mathrm{fin},+}}(0+), \psi_{j,1}^\infty(0+), \dots, \psi^\infty_{j,d_j^{\infty,+}}(0+) \right)^\mathrm{T}, \\ &&
\psi_j^\mathrm{in}:= \left( \psi_{j,1}^\mathrm{fin}(l_{j,1}-), \dots, \psi^\mathrm{fin}_{j,d_j^{\mathrm{fin},-}} (l_{j,d_j^{\mathrm{fin},-}}-), \psi_{j,1}^\infty(0-), \dots, \psi^\infty_{j,d_j^{\infty,-}}(0-) \right)^\mathrm{T},
\end{eqnarray*}
where $l_{j,k},\, k=1,\dots,d_j^{\mathrm{fin},-}$ are the lengths of the incoming finite edges at the vertex $v_j$. We will call an extension $P_U$ a \emph{momentum operator} on the graph $\Gamma$ if the unitary operator $U$ is block diagonal, $U= \mathrm{diag\,} \{ U_j:\, j\in I\}$, with the unitary matrix blocks coupling the vertex boundary values
\begin{equation} \label{momentum-op-cond}
\psi_j^\mathrm{out} = U_j \psi_j^\mathrm{in}\,, \quad j\in I\,.
\end{equation}
This definition leads to the following conclusion.

\begin{theorem}
An oriented graph $\Gamma$ supports momentum operators \emph{iff} it is balanced. In such a case they are characterized by the conditions (\ref{momentum-op-cond}).
\end{theorem}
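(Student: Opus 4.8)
The plan is to reduce the statement to the Proposition together with the earlier Theorem on balanced orientability and a dimension-counting argument about block-diagonal unitaries. I would organize the proof around the observation that the locality requirement, which forces $U = \mathrm{diag}\{U_j: j\in I\}$, is incompatible with self-adjointness unless the local incoming and outgoing data have matching dimension at every vertex.

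First I would establish the ``only if'' direction. Suppose $P_U$ is a momentum operator, so that $U$ is block diagonal with blocks $U_j$ satisfying (\ref{momentum-op-cond}). Since $P_U$ is self-adjoint, $U$ is unitary on a space of dimension $N_\mathrm{fin}+N_\infty^+$, and unitarity of a block-diagonal operator is equivalent to unitarity of each block $U_j$. But $U_j$ maps the space of $\psi_j^\mathrm{in}$, of dimension $d_j^\mathrm{in}:= d_j^{\mathrm{fin},-}+d_j^{\infty,-}=d_j^-$, onto the space of $\psi_j^\mathrm{out}$, of dimension $d_j^+:= d_j^{\mathrm{fin},+}+d_j^{\infty,+}$; a unitary between finite-dimensional spaces forces $d_j^+=d_j^-$. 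One then has to promote this to the stronger balance condition $d_j^{\mathrm{fin},+}=d_j^{\mathrm{fin},-}$ and $d_j^{\infty,+}=d_j^{\infty,-}$ separately. This is where I expect the main subtlety: a priori $d_j^+=d_j^-$ only gives the sum condition. The resolution is to notice that the boundary form (\ref{boundary-form}) must vanish identically on the full domain, and the finite-edge contributions (evaluated at endpoints $l_k-$) and the semi-infinite contributions (evaluated at $0\pm$) cannot be traded against one another because the corresponding function values at a given vertex are independent in $D(P_0^*)$; hence the ``finite'' and ``infinite'' parts of the form must each vanish at $v_j$, which forces the two dimension identities separately. (Alternatively, one argues globally from the Proposition that $N_\mathrm{fin}=N_\mathrm{fin}$ automatically and $N_\infty^+=N_\infty^-$ is needed when finite, then localizes; but the cleanest route is the independence-of-boundary-values observation, since near any single vertex one can construct $W^{1,2}$ functions supported in a neighborhood with arbitrary prescribed vertex values.)

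Next I would establish the ``if'' direction. Assume $\Gamma$ is balanced, so $d_j^{\mathrm{fin},+}=d_j^{\mathrm{fin},-}$ and $d_j^{\infty,+}=d_j^{\infty,-}$ for all $j$, whence $d_j^+=d_j^-$ for every vertex and consequently $N_\mathrm{fin}+N_\infty^+=N_\mathrm{fin}+N_\infty^-$ (using $N_\infty^+=N_\infty^-$). For each $j$ choose any unitary matrix $U_j$ of size $d_j^+=d_j^-$ — for instance the identity, which always exists and is finite-dimensional thanks to the local finiteness bound (\ref{finitedeg}) — and set $U:=\mathrm{diag}\{U_j:j\in I\}$. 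This $U$ is a well-defined unitary operator on $\ell^2$-sums of the boundary spaces (boundedness by $1$ on each block gives boundedness overall, and blockwise unitarity gives $U^*U=UU^*=\mathbf{1}$), and it is of the dimension required by the Proposition, so $P_U$ is a self-adjoint extension of $P_0$; by construction it satisfies the locality condition (\ref{momentum-op-cond}), hence is a momentum operator. Finally, the characterization statement ``in such a case they are characterized by (\ref{momentum-op-cond})'' is just the restatement that, given balance, the momentum operators are exactly the $P_U$ with block-diagonal unitary $U$ — this follows by combining the Proposition (all self-adjoint extensions are the $P_U$ with $\psi^\mathrm{out}=U\psi^\mathrm{in}$) with the definition of momentum operator (those $P_U$ for which $U$ is block diagonal).

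I would close with a short remark that the argument also makes the connection to the orientability Theorem transparent: an \emph{undirected} graph admits a balanced orientation, hence a momentum operator, precisely when all vertex degrees are even, so the two theorems together give a clean necessary-and-sufficient picture. The one genuinely non-routine point, worth a sentence of care in the write-up, remains the passage from $d_j^+=d_j^-$ to the separate finite/infinite balance in the ``only if'' direction; everything else is bookkeeping on top of the Proposition.
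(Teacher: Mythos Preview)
Your core argument is the same as the paper's: the paper's entire proof is the single sentence ``One has to compare the dimensions of the boundary value spaces,'' i.e.\ unitarity of each block $U_j$ forces $d_j^+=d_j^-$. Everything you write around that (invoking the Proposition, checking that a block-diagonal $U$ is globally unitary, etc.) is sound but more detail than the paper supplies.

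The one place where you diverge from the paper is in the passage you flag as ``the main subtlety'': promoting $d_j^+=d_j^-$ to the separate identities $d_j^{\mathrm{fin},+}=d_j^{\mathrm{fin},-}$ and $d_j^{\infty,+}=d_j^{\infty,-}$. This step is actually incorrect, and your proposed argument for it does not work. The boundary form \eqref{boundary-form} at a vertex $v_j$ reads $i\|\psi_j^{\mathrm{out}}\|^2-i\|\psi_j^{\mathrm{in}}\|^2$, and there is nothing that separates the finite-edge and semi-infinite-edge components of $\psi_j^{\mathrm{out}}$ or $\psi_j^{\mathrm{in}}$: a unitary $U_j$ may freely mix a value $\psi^{\mathrm{fin}}(l_k-)$ coming in along a finite edge with a value $\psi^\infty(0+)$ going out along a lead. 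The ``independence of boundary values in $D(P_0^*)$'' cuts the wrong way---it is precisely because those values are independent that one \emph{can} trade them against each other via $U_j$. Concretely, the two-loop graph of Section~\ref{s:example} has at its left vertex $d_j^{\mathrm{fin},+}=2$, $d_j^{\mathrm{fin},-}=1$, $d_j^{\infty,+}=0$, $d_j^{\infty,-}=1$, so the separate identities fail, yet the paper declares it ``obviously balanced oriented'' and builds a momentum operator on it. In other words, the operative notion of \emph{balanced} in this theorem is $d_j^+=d_j^-$ at every vertex (consistent with standard graph-theoretic usage and with the ``vertex at infinity'' remark after the definition), and the pure dimension count already gives the full result. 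Drop the paragraph about separating finite and infinite parts; the rest of your write-up is fine.
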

\begin{proof}
One has to compare the dimensions of the boundary value spaces.
\end{proof}

Before proceeding further it is appropriate to say a few words about the physical meaning of the self-adjointness requirement (\ref{momentum-op-cond}), in particular, in comparison with the more common problem of constructing self-adjoint Laplacians on graphs \cite{ES, Ku, BK}. In the latter case self-adjointness ensures conservation of probability current the components of which, $(\psi_k,-i\psi'_k) = \frac{1}{2i}\, \mathrm{Im}\, (\psi_k,\psi'_k)$ on the $k$-th edge, enter the appropriate boundary form. Here, in contrast, the right-hand side of eq.~(\ref{boundary-form}) contains \emph{probabilities} of finding the particle at the incoming and outgoing edge points; the importance of this fact will become more obvious when we shall construct the unitary groups associated with the operators $P_U$ in Sec.~\ref{s:group} below\footnote{Note also the difference in the size of the matrices which determine the coupling. For a graph Laplacian the extensions are described by $d_j\times d_j$ matrices in the vertex $v_j$, cf. \cite[Thm.~5]{Ku}, while here we deal with unitary matrices which are $d_j^\pm\times d_j^\pm$, i.e. half the size.}.

Another thing which deserves a comment is the locality requirement we have made. We have mentioned in Remark~\ref{rmk:orientability}a that it is sometimes useful to replace a given graph $\Gamma$ by another one in which the vertices are identified; such an identification naturally extends the class of admissible momentum operators. These considerations can have a practical meaning. An elementary example concerns a momentum operator on a finite interval which in itself is not a balanced orientable graph, however, it arises naturally when we factorize it from an infinite periodic system on line \cite[Sec.~III.2]{AGHH}; turning it into a torus --- i.e., a loop in this case --- by identifying the endpoints, we get a family of momentum operators conventionally referred to as \emph{quasimomentum} in this case. A less trivial example of this type is represented by two-interval momentum operators discussed in \cite{JPT11} in connection with the Fuglede conjecture.

Note also that the matrix $U$ defining a momentum operator can be subject to other restrictions, in addition to those imposed by the locality requirement. If there is a proper subgraph $\Gamma' \subset \Gamma$ which can be in the sense of Remark~\ref{rmk:orientability}c identified with a family of paths and $P_U$ is reduced by the subspace $L^2(\Gamma') \subset L^2(\Gamma)$ we say that $P_U$ is \emph{decomposable}, in the opposite case we call the momentum operator \emph{indecomposable}. A decomposable $P_U$ can be regarded as a collection of momentum operators on the appropriate subgraphs which can be analyzed independently.

A similar approach can also be applied locally: if a subspace of the boundary-value space at a given vertex reduces the corresponding coupling matrix $U_j$ we may regard the vertex in question as a family of vertices, each of them connecting only the edges ``talking to each other''. i.e. referring to the same invariant subspace of $U_j$. It is useful to stress that here we mean not just any subspace but such that the vectors associated with the edges involved form its basis, in other words, the matrix $U_j$ is block-diagonal after a suitable permutation of its rows and columns.

\section{Momentum operator spectra: finite graphs}

From now on we shall consider only balanced oriented graphs and investigate properties of momentum operators on them. The first question concerns their spectra; it is not surprising that the graph finiteness plays here a decisive role.

\begin{theorem} \label{thm: specfin}
If a graph $\Gamma$ is finite, then any momentum operator $P_U$ on it has a purely discrete spectrum. Moreover, $N_U(\lambda):= \sharp (\sigma(P_U)\cap (-\lambda,\lambda)) \le \frac1\pi L\lambda + \mathcal{O}(1)$ holds as the window half-width $\lambda\to\infty$, where $L$ is the total length of $\Gamma$.
\end{theorem}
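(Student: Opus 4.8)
The plan is to reduce the problem to a direct-sum decomposition over free paths and then count eigenvalues path-by-path. By Remark~\ref{rmk:orientability}(c), a balanced oriented finite graph $\Gamma$ is a union of free paths; since $\Gamma$ is finite and has no leads, each free path is a loop consisting of finitely many finite edges of total length $L_m$, with $\sum_m L_m = L$. For a momentum operator $P_U$ the boundary conditions \eqref{momentum-op-cond} couple only boundary values at the same vertex, so $P_U$ does \emph{not} in general decompose along loops — the coupling matrices $U_j$ mix edges belonging to different loops. However, each $U_j$ is unitary of size $d_j^+ \times d_j^+$, and a loop entering $v_j$ contributes one incoming and one outgoing edge-end there; the point is that even without decomposition, one can reparametrize each loop as a single circle of circumference $L_m$ and view $P_U$ as $-i\,\mathrm{d}/\mathrm{d}x$ on $\bigoplus_m L^2(0,L_m)$ with a coupling that permutes these circles' endpoints according to $U$. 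Concretely, I would first treat the case where the $U_j$ are permutation-like (so $P_U$ genuinely splits into operators on circles), and then handle the general case by a comparison argument.

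**Next**, for the genuinely decomposable case: on a single circle of circumference $\ell$ with twisted periodic condition $\psi(\ell) = e^{i\theta}\psi(0)$, the operator $-i\,\mathrm{d}/\mathrm{d}x$ has spectrum $\{(2\pi n + \theta)/\ell : n\in\mathbb{Z}\}$, purely discrete, and the number of eigenvalues in $(-\lambda,\lambda)$ is $\frac{\ell}{\pi}\lambda + \mathcal{O}(1)$ uniformly in $\theta$. Summing over the finitely many circles gives purely discrete spectrum and $N_U(\lambda) \le \frac{1}{\pi}L\lambda + \mathcal{O}(1)$, the $\mathcal{O}(1)$ being at most the number of loops. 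For the general (indecomposable) case, I would argue that the resolvent of $P_U$ is still compact: on each edge $-i\,\mathrm{d}/\mathrm{d}x$ minus a point in the resolvent set of some auxiliary decomposable extension has a Hilbert--Schmidt (indeed finite-rank-perturbation-of-HS) resolvent, because the graph is finite and all edges have finite length; a finite-dimensional change in boundary conditions is a finite-rank perturbation of the resolvent, so discreteness and the Weyl-type asymptotics are preserved. Equivalently, one can invoke that $P_U$ and $P_{U'}$ for two momentum operators on the same $\Gamma$ differ by a relatively bounded (in fact bounded, after a Cayley-type comparison) finite-rank modification, which shifts $N_U(\lambda)$ by at most $\mathcal{O}(1)$.

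**The cleanest route for the counting bound** is probably to exhibit explicitly a decomposable comparison operator and use eigenvalue interlacing. Pick any orientation-compatible splitting of $\Gamma$ into loops (guaranteed by the construction in the proof of the orientability theorem) and let $P_{U_0}$ be the corresponding decomposable momentum operator with, say, $U_j$ taken to be the permutation matrix matching each incoming loop-edge to its outgoing continuation. Then $P_{U_0}$ has the spectrum computed above. Writing $P_U = P_{U_0} + B$ where $B$ encodes the finite-dimensional difference of boundary conditions — more precisely comparing resolvents $(P_U - z)^{-1} - (P_{U_0}-z)^{-1}$, which has rank at most $\sum_j d_j^+ = N_{\mathrm{fin}}+N_\infty^+ < \infty$ — the min-max principle yields $|N_U(\lambda) - N_{U_0}(\lambda)| \le 2\sum_j d_j^+$, hence $N_U(\lambda) \le \frac{1}{\pi}L\lambda + \mathcal{O}(1)$.

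**The main obstacle** I anticipate is making the "finite-rank difference of resolvents" claim rigorous for first-order operators, where the two self-adjoint extensions $P_U, P_{U_0}$ have the \emph{same} differential expression but different domains that are not nested; one must check that $(P_U-z)^{-1}-(P_{U_0}-z)^{-1}$ indeed has rank bounded by the number of boundary conditions (this is standard Krein-formula bookkeeping, but the semi-infinite edges and the possibility $N_{\mathrm{fin}}$ large relative to the loop count need care) and that the resulting eigenvalue shift is genuinely $\mathcal{O}(1)$ uniformly in $\lambda$. A subtler point is that for an indecomposable $P_U$ one cannot simply read off the spectrum, so the discreteness must come from compactness of the resolvent on the finite graph — which is where finiteness of $\Gamma$ (all edge lengths finite, finitely many edges) is essential — rather than from an explicit eigenvalue formula.
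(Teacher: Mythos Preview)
Your approach is essentially the same as the paper's: decompose the finite balanced graph into loops, take the decoupled comparison operator $P_{U_0}$ whose spectrum is explicit, and transfer both discreteness and the eigenvalue count to a general $P_U$ via Krein's formula (finite-rank resolvent difference) together with the standard fact that a rank-$r$ resolvent perturbation shifts the counting function by at most $\mathcal{O}(r)$. Your worries about semi-infinite edges are moot here since the theorem assumes $\Gamma$ finite, and the ``main obstacle'' you flag is exactly what Krein's formula handles---so the proposal is correct but can be stated more confidently.
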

\begin{proof}
A finite balanced oriented graph can be in view of Remarks~\ref{rmk:orientability} identified with a finite family of loops. Choosing $U_0$ for which all of them are mutually disconnected and the functions on each of them are smoothly connected at the vertices, we get operator $P_{U_0}$ with a purely discrete spectrum --- in fact, one can write explicitly
$\sigma(P_{U_0}) = \bigcup_j \{ \frac{2\pi m}{\ell_j}:\: m\in\mathbb{Z}\}$ where $\ell_j$ are the lengths of the loops. Since by Krein's formula any other self-adjoint extension $P_{U}$ differs from $P_{U_0}$ by a finite-rank perturbation in the resolvent, the character of the spectrum is preserved.

Furthermore, using the explicit spectrum of $P_{U_0}$ we find that the number of its eigenvalues in $(-\lambda,\lambda)$ can be estimated from above by $\sharp I + \sum_j \frac{\lambda \ell_j}{\pi}$ and since by \cite[Sec.~8.3]{We} any other $P_{U}$ can have at most $\sharp I$ eigenvalues, counting multiplicity, in each gap of $\sigma(P_{U_0})$ we arrive at the estimate
 $$
N_U(\lambda) \le 3\sharp I + \frac1\pi \lambda \sum_j \ell_j\,,
 $$
which yields the sought `semiclassical' result.
\end{proof}

\begin{example}
Consider a \emph{loop graph} of total length $L$ with vertices at points $0<x_1<x_2< \cdots <x_N=L$ with the points $x=0$ and $x=L$ identified. Since the degree of each vertex is two, all the admissible momentum operators are characterized by $N$-tuples $\alpha=\{\alpha_1,\dots,\alpha_N\} \in (-\pi,\pi]^N \subset \mathbb{R}^N$ through the conditions $\psi(x_j+) = \mathrm{e}^{i\alpha_j} \psi(x_j-)\,, \: j=1,\dots,N$. The spectrum is easily found to be
 $$
\sigma(P_U) = \bigg\{ \frac1L \Big( 2\pi n - \sum_{j=1}^N \alpha_j \Big):\: n\in \mathbb{Z}\, \bigg\}\,.
 $$
It is obvious that the family of all $P_U$ decomposes into isospectral classes specified by the values $\det U$ of the diagonal matrices $U$.
\end{example}
The spectrum may become more complex once the graph $\Gamma$ has a branching, i.e. there is at least one vertex of degree $d_j\ge 4$; recall that $d_j=3$ is excluded by the orientability requirement as discussed in Sec.~\ref{s: balanced}.

\begin{example}
A \emph{figure-eight graph} is equivalent in the sense of \cite{Ku, EL} to the ``two-interval'' case discussed in detail in \cite{JPT11}. The eigenvalues of any $P_U$ here have maximum multiplicity two but their distribution is more complicated than in the previous example, in particular, it is purely aperiodic provided the length ratio of the two loops of the graph is irrational.
\end{example}

The asymptotic estimate given in Theorem~\ref{thm: specfin} can be in fact made two-sided.
\begin{theorem} \label{thm: specfin2}
For a finite $\Gamma$ we have $N_U(\lambda) = \frac1\pi L\lambda + \mathcal{O}(1)$ as $\lambda\to\infty$.
\end{theorem}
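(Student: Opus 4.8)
The plan is to keep the upper bound from Theorem~\ref{thm: specfin} and supply a matching lower bound, both obtained by comparison with the distinguished extension $P_{U_0}$ used there. Recall that a finite balanced oriented $\Gamma$ is a finite family of $\nu<\infty$ loops of lengths $\ell_1,\dots,\ell_\nu$ with $\sum_j\ell_j=L$, and that $P_{U_0}$ --- the loops mutually disconnected, functions smoothly matched in the vertices --- has the explicit spectrum $\sigma(P_{U_0})=\bigcup_j\{2\pi m/\ell_j:\ m\in\mathbb{Z}\}$. So it suffices to estimate $N_{U_0}(\lambda)$ from below and to control the difference $N_U(\lambda)-N_{U_0}(\lambda)$.

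For the first task I would carry out an elementary lattice-point count: for each $j$ the number of integers $m$ with $2\pi\abs{m}/\ell_j<\lambda$ is at least $\lambda\ell_j/\pi-1$, hence
\[
N_{U_0}(\lambda)=\sum_{j=1}^{\nu}\sharp\Big\{m\in\mathbb{Z}:\ \tfrac{2\pi\abs{m}}{\ell_j}<\lambda\Big\}\ \ge\ \frac1\pi\lambda\sum_{j=1}^{\nu}\ell_j-\nu\ =\ \frac1\pi L\lambda-\nu ,
\]
which together with the complementary bound $N_{U_0}(\lambda)\le\frac1\pi L\lambda+\nu$ already used in Theorem~\ref{thm: specfin} gives $N_{U_0}(\lambda)=\frac1\pi L\lambda+\mathcal{O}(1)$. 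For the second task, Krein's formula shows that the resolvents of $P_U$ and $P_{U_0}$ differ by a finite-rank operator, and the same stability fact from \cite[Sec.~8.3]{We} that was invoked one-sidedly in Theorem~\ref{thm: specfin}, now read symmetrically, says that on listing the bi-infinite eigenvalue sequences of the two operators in increasing order they interlace up to a fixed finite shift $r$ depending only on $\Gamma$; hence $\abs{N_U(\lambda)-N_{U_0}(\lambda)}\le 2r$ for every $\lambda$. Combining the estimates yields $\frac1\pi L\lambda-c\le N_U(\lambda)\le\frac1\pi L\lambda+c$ with $c=2r+\nu$, as claimed.

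Essentially everything here is already available; the only genuinely new computation is the lattice-point lower bound, which is routine. The single point deserving care --- and the nearest thing to an obstacle --- is that the finite-rank resolvent comparison must be used as a genuine two-sided interlacing statement for the eigenvalue sequences of operators bounded neither above nor below, rather than as the one-directional ``at most a fixed number of eigenvalues per gap'' estimate used earlier; since interlacing is symmetric in the two operators this costs nothing, so I do not anticipate a real difficulty.
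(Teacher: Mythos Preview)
Your argument is correct and takes a genuinely different route from the paper's. You stay entirely within the perturbative comparison to $P_{U_0}$: a direct lattice-point count gives $N_{U_0}(\lambda)=\frac1\pi L\lambda+\mathcal{O}(1)$, and the finite-rank resolvent difference, now read as a two-sided interlacing of the bi-infinite eigenvalue sequences, transfers this to $N_U(\lambda)$ with a uniformly bounded error. Your caveat about the non-semibounded interlacing is well placed but, as you suspect, harmless: the spectral-shift bound coming from a rank-$r$ resolvent difference controls $\lvert N_U(a,b)-N_{U_0}(a,b)\rvert$ on \emph{any} finite interval, which is exactly what you need. The paper instead passes to the flower model and writes the eigenvalue condition explicitly as $\det\big(U_{jm}\,\mathrm{e}^{ik\ell_m}-\delta_{jm}\big)=0$; the left-hand side is an exponential sum with senior term $\prod_j U_{jj}\,\mathrm{e}^{ikL}$ and junior term $(-1)^{N_\mathrm{fin}}$, and Langer's classical result \cite{La} on zeros of such sums gives $\frac1\pi L\lambda+\mathcal{O}(1)$ directly. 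A separate analytic-perturbation argument is then required to dispose of the degenerate case $\prod_j U_{jj}=0$. Your route is more elementary and avoids both the appeal to Langer and the case split; the paper's route, on the other hand, produces the secular equation explicitly, which identifies $\sigma(P_U)$ as the real zero set of a concrete trigonometric polynomial and thus carries information beyond the counting asymptotics.
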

\begin{proof}
We can identify the loops from the proof of Theorem~\ref{thm: specfin} with the edges of $\Gamma$ in the sense of Remark~\ref{rmk:orientability}a. Substituting the boundary values $\psi^\mathrm{out} = \{ c_1,\dots, c_{N_\mathrm{fin}} \}^\mathrm{T}$ and $\psi^\mathrm{in} = \{ c_1 \mathrm{e}^{ik\ell_1},\dots, c_{N_\mathrm{fin}} \mathrm{e}^{ik\ell_{N_\mathrm{fin}}} \}^\mathrm{T}$ into the condition (\ref{momentum-op-cond}) we get $N_\mathrm{fin}$ conditions to satisfy, $\sum_{m=1}^{N_\mathrm{fin}} \left( U_{jm} \mathrm{e}^{ik\ell_m} - \delta_{jm} \right) c_m = 0\,,\, j=1,\dots,N_\mathrm{fin}\,$; note that the left-hand side can vanish at real values of $k$ only since $U$ is unitary. This system of linear equations can be solved \emph{iff}
 $$
\det\left( U_{jm} \mathrm{e}^{ik\ell_m} - \delta_{jm} \right)_{j,m=1}^{N_\mathrm{fin}} =0\,.
 $$
The left-hand side of the last relation is a trigonometric polynomial with the senior and junior terms $\prod_j U_{jj}\, \mathrm{e}^{ikL}$ and $(-1)^{N_\mathrm{fin}}$, respectively. Assume first that the senior coefficient $\prod_j U_{jj}\ne 0$. According to Langer's classical result \cite{La} the number of zeros in the interval $(-\lambda,\lambda)$ behaves then as $\frac{1}{\pi} L\lambda + \mathcal{O}(1)$ as $\lambda\to \infty$. If the assumption is not satisfied we use the fact that by the implicit-function theorem the zeros as functions of the entries $U_{jm}$ of $U$ are analytic, hence for such a $P_U$ there is a family of zeros with the `correct' asymptotics arbitrarily close to $\sigma(P_U)$.

Let us add that there is a subtle point in the last argument. In general it may happen that the vanishing senior coefficient in a spectral condition in the form of a trigonometric expression may change the asymptotic distribution of zeros if some of the latter escape to infinity in the complex plane as the parameters vary; an example of such a behavior can be found in \cite{DEL}. Here it cannot happen, however, because we know from the outset that all the zeros lie at the real axis.
\end{proof}

 \section{Momentum operator spectra: infinite graphs}

The situation changes substantially if we abandon the finiteness assumption.
\begin{theorem}
If $N_\infty>0$ we have $\sigma(P_U)= \sigma_\mathrm{ess}(P_U) =\mathbb{R}$ for any momentum operator $P_U$. The same is true if $\sharp I=\infty$ and there is a unitary operator $U'$ with $\dim \emph{Ran} (U-U')<\infty$ such that $P_{U'}$ is decomposable containing at least one infinite free path; in such a case we also have  $\sigma_\mathrm{ac}(P_U) =\mathbb{R}$.
\end{theorem}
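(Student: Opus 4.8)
The plan is to treat the two assertions separately, in each case reducing the question to an explicit model operator on a half-line or a line and then invoking a perturbation-stability argument for the essential (resp. absolutely continuous) spectrum.

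\emph{Step 1: the case $N_\infty>0$.} First I would fix a semi-infinite edge, say an outgoing lead $h_k$ attached at $v_j$, and construct for every $\mu\in\mathbb{R}$ an explicit approximate eigenfunction (Weyl sequence) supported on that lead: take $\psi_n$ equal to $\chi_n(x)\,\mathrm{e}^{i\mu x}$ on $h_k$ and zero on the rest of $\Gamma$, where $\chi_n$ is a smooth cutoff that is $0$ near $x=0$ (so the vertex condition \eqref{momentum-op-cond} is trivially satisfied since all boundary values vanish), equals $1$ on $[a_n,b_n]$ with $b_n-a_n\to\infty$, and has $\|\chi_n'\|_\infty\to0$. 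Then $(P_U-\mu)\psi_n = -i\chi_n'\,\mathrm{e}^{i\mu x}$ has norm $o(1)$ while $\|\psi_n\|\ge 1$, and the $\psi_n$ can be chosen with disjoint supports, so $\mu\in\sigma_\mathrm{ess}(P_U)$. Since $P_U$ is self-adjoint its spectrum is real, so this gives $\sigma_\mathrm{ess}(P_U)=\sigma(P_U)=\mathbb{R}$. (If there is only an incoming lead the construction is identical with $(-\infty,0]$ in place of $[0,\infty)$.)

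\emph{Step 2: the decomposable infinite-path case.} Here I would first handle $P_{U'}$ itself. By hypothesis $P_{U'}$ is reduced by $L^2(\gamma)$ for some infinite free path $\gamma$, and the restriction $P_{U'}\restriction L^2(\gamma)$ is unitarily equivalent, via concatenation of the edge parametrizations of $\gamma$ into a single coordinate on $\mathbb{R}$, to a momentum operator on $L^2(\mathbb{R})$ of the form $-i\frac{d}{dx}$ on $W^{1,2}(\mathbb{R}\setminus\{x_m\})$ with vertex conditions $\psi(x_m+)=\mathrm{e}^{i\alpha_m}\psi(x_m-)$ at the (locally finite, uniformly separated by \eqref{minlength}) set of images $x_m$ of the vertices of $\gamma$. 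A gauge transformation $\psi(x)\mapsto \exp\!\big(i\sum_{x_m<x}\alpha_m\big)\psi(x)$ — which is unitary and locally constant, hence bounded — removes all the phases and turns this operator into the free momentum $-i\frac{d}{dx}$ on $W^{1,2}(\mathbb{R})$, whose spectrum is purely absolutely continuous and equals $\mathbb{R}$. Hence $\sigma_\mathrm{ac}(P_{U'})=\mathbb{R}$ and a fortiori $\sigma_\mathrm{ess}(P_{U'})=\sigma(P_{U'})=\mathbb{R}$.

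\emph{Step 3: passing from $P_{U'}$ to $P_U$.} The operators $P_U$ and $P_{U'}$ are both self-adjoint extensions of $P_0$ and differ only in finitely many vertex blocks (those where $U$ and $U'$ disagree), which by $\dim\mathrm{Ran}(U-U')<\infty$ is a finite-rank discrepancy in the boundary data; consequently, by Krein's resolvent formula exactly as used in the proof of Theorem~\ref{thm: specfin}, their resolvents differ by a finite-rank operator. Finite-rank (indeed relatively compact) perturbations in the resolvent sense preserve the essential spectrum (Weyl's theorem), so $\sigma_\mathrm{ess}(P_U)=\sigma_\mathrm{ess}(P_{U'})=\mathbb{R}$, whence $\sigma(P_U)=\mathbb{R}$ as well. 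For the absolutely continuous part I would invoke the Kato–Rosenblum theorem: a finite-rank (hence trace-class) perturbation of the resolvent implies existence and completeness of the wave operators, so the a.c. parts of $P_U$ and $P_{U'}$ are unitarily equivalent, giving $\sigma_\mathrm{ac}(P_U)=\sigma_\mathrm{ac}(P_{U'})=\mathbb{R}$.

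The routine parts are the Weyl-sequence estimate in Step 1 and the bookkeeping in Step 3. The main obstacle is Step 2: one must check carefully that the concatenation of the edges of $\gamma$ really does produce an operator unitarily equivalent to a phase-twisted momentum on all of $\mathbb{R}$ — in particular that assumption \eqref{minlength} guarantees the vertex images $x_m$ do not accumulate, so that the gauge function is well defined and the transformed operator has domain exactly $W^{1,2}(\mathbb{R})$ — and that the locality/balance structure indeed forces the restriction of $P_{U'}$ to $L^2(\gamma)$ to have this one-dimensional form with scalar phase factors rather than some larger coupling. Once that identification is in place, the spectral conclusions follow from standard one-dimensional facts and abstract perturbation theory.
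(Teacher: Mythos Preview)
Your proof is correct and follows essentially the same route as the paper's: a Weyl-sequence construction on an external lead for the first claim, identification of the infinite free path with the momentum operator on $\mathbb{R}$ for $P_{U'}$, and a finite-rank resolvent comparison to pass from $P_{U'}$ to $P_U$. You supply more detail than the paper does --- the explicit gauge transformation removing the vertex phases along $\gamma$, and the named invocation of Weyl's theorem and Kato--Rosenblum/Birman--Kuroda --- where the paper simply asserts the unitary equivalence to $-i\frac{\mathrm{d}}{\mathrm{d}x}$ on $L^2(\mathbb{R})$ and that finite-rank resolvent difference preserves the absolutely continuous spectrum.
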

\begin{proof}
Suppose that the graph has at least one external edge. Since the graph remains balanced oriented if we switch the orientation of each edge, we may assume without loss of generality that it is an outgoing one parametrized by $[0,\infty)$. For a fixed $k\in\mathbb{R}$ we take the following family of functions,
\begin{equation}
\psi_{y,\epsilon}:\: \psi_{y,\epsilon}(x) = \epsilon^{1/2}\, \mathrm{e}^{ikx}\, \phi(\epsilon(x-y))\,, \quad \epsilon>0\,,
\end{equation}
with $\phi \in C_0^\infty(-1,1)$ such that $\|\phi\|=1$. We obviously have $\|\psi_{y,\epsilon}\|=1$ and the supports lie in the positive halfline provided $y>\epsilon^{-1}$. We use the same symbol for functions on $\Gamma$ which are zero on all the other edges; them $P_U \psi_{y,\epsilon}$ is independent of $U$ and an easy computation gives
 $$ 
\| (P_U-k) \psi_{2\epsilon^{-1},\epsilon} \|^2 = \epsilon^2 \int_{-1}^1 |\phi'(u)|^2\, \mathrm{d}u \, \longrightarrow\, 0
 $$ 
as $\epsilon\to 0$. Choosing $\epsilon_n = 2^{-2n},\, n\in\mathbb{N}$, we achieve that the functions with different $n$ have disjoint supports, hence $\psi_{2\epsilon_n^{-1},\epsilon_n} \to 0$ in the weak topology of $L^2(\Gamma)$, and by Weyl's criterion we infer that $k\in \sigma_\mathrm{ess} (P_U)$.

In the second case the motion on $\Gamma$ corresponding to $P_{U'}$ can be decoupled and it contains a component which is unitarily equivalent to the momentum operator on $\mathbb{R}$, hence $\sigma_\mathrm{ac}(P_{U'}) =\mathbb{R}$; since by assumption the resolvents of $P_{U'}$ and the original $P_U$ differ by a finite-rank operator, their absolutely continuous spectra coincide.
\end{proof}

The second claim can ensure that the absolutely continuous spectrum covers the real axis even in absence of external edges; examples are easily found. On the other hand, conclusions of the theorem naturally do not mean  that momentum operators on finite-core graphs must have a purely absolutely continuous spectrum.

\begin{example} \label{ex:loopline}
Consider a graph consisting of a line to which a loop of length $\ell>0$ is attached at one point. Modulo mirror transformations, there is essentially one way to orient such a graph in a balanced way. Consider two different matrices coupling the boundary values $(\psi_1^\infty(0-), \psi^\mathrm{fin}(\ell-))^\mathrm{T}$ and $(\psi_2^\infty(0+), \psi^\mathrm{fin}(0+))^\mathrm{T}$. If we take $U= {0\;1 \choose 1\;0}$ the corresponding $P_U$ is unitarily equivalent to the momentum operator on the line and has a purely \emph{ac} spectrum. On the other hand, $U'=I$ leads to full decoupling of the loop from the line giving rise to an infinite family of  eigenvalues, $\{ \frac{2\pi m}{\ell}:\: m\in\mathbb{Z}\}$ embedded in $\sigma_\mathrm{ac}(P_{U'})=\mathbb{R}$.
\end{example}

Different ways of orienting a given undirected graph can give rise to momentum operators with different spectra. Nevertheless, some properties are invariant with respect to the choice of orientation, for instance, the conclusions related to the the presence or absence of external edges in the above two theorems. The differences can be more dramatic if $\sharp I=\infty$ as the following example shows.

\begin{example}
Consider the square-lattice graph of Remark~\ref{rmk:orientability}e. If the orientation follows the horizontal and vertical lines we can choose $U$ which identifies at each vertex the limits in the two directions separately. The respective operator $P_U$ is isomorphic to an infinite direct sum of identical copies of the operator $-i \frac{\mathrm{d}}{\mathrm{d}x}$ on $L^2(\mathbb{R})$, and consequently, $\sigma(P_U)= \mathbb{R}$ with infinite multiplicity. On the other hand, consider the ``checkerboard'' orientation which identifies the graph with an infinite family of loops of length $4\ell$. Choosing $U'$ which gives rise to self-adjoint extension $P_{U'}$ the domain of which are functions which are $W^{1,2}$ locally and continuous on each square loop, we get $\sigma(P_{U'})= \{ \frac{\pi m}{2\ell}:\: m\in\mathbb{Z}\}$, again infinitely degenerate.
\end{example}

 \section{Groups associated with momentum operators}
 \label{s:group}

In order to describe the associated groups let us first introduce the notion of a \emph{route} $r_{x',x}$ from a point $x'\in \Gamma$ to a point $x\in \Gamma$. By this we mean a finite curve $r_{x',x}:\: [0,\ell(r_{x',x})] \to\Gamma$ with $r_{x',x}(0) = x$ and $r_{x',x}(\ell(r_{x',x}) = x'$ respecting the orientation of $\Gamma$; the number $\ell(r_{x',x})$ is called the \emph{length} of $r_{x',x}$. We stress the difference between a \emph{path} in $\Gamma$ considered in Section~\ref{s: balanced}  above and a route: the former is a sequence of edges through which the particle can travel respecting the orientation, the latter is the actual journey it made. If $\Gamma$ contains a loop, for instance, for a path it is not important whether one circles it repeatedly or not, while for a route it certainly makes a difference.

We fix a momentum operator $P_U$ on $\Gamma$ determined by a matrix $U$, and define the \emph{route factor} for $r_{x',x}$ with $x',x\not\in\mathcal{E}$. It is equal to one if there is no $x'' \in (0,\ell(r_{x',x}))$ such that $r_{x',x}(x'')\in \mathcal{E}$, i.e. if a traveler on this route meets no vertex. In the opposite case we take in each vertex on the route the factor coming from the coupling --- if the traveler passes from $m$-th ingoing edge to the $j$-th outgoing the factor is the $U_{jm}$ element of the matrix --- and define the route factor as the product of the factors coming from all the vertices passed on the route; we denote it as $U(r_{x',x})$. It may be zero if some of the matrix elements on the way vanishes.

Given $x\in\Gamma$ which is not a vertex we denote by $\mathcal{R}_\mathit{a}(x)$ the set of routes of length $\mathit{a}$ ending at $x$. It may consists of a single route which happens if going back we meet no vertex --- it happens for sure if $\mathit{a}$ is small enough --- or of a larger number of routes, however, in view of the assumption (\ref{minlength}) their number is finite. Equipped with these notions we can construct the group associated with $P_U$.

We will write elements of $L^2(\Gamma)$ as $\psi = \{ \psi_k:\: k\in I^e\}^\mathrm{T}$ without distinguishing now internal and external edges. Choosing a point $x =\{x_k:\: x_k\in e_k,\, k\in I^e\}$ we write values of a function representing element $\psi\in L^2(\Gamma)$ at the point $x$ as $\psi(x) = \{ \psi_k(x_k):\: k\in I^e\}^\mathrm{T}$ and define the operator $\mathcal{U}(\mathit{a}):\: L^2(\Gamma) \to L^2(\Gamma)$ by
\begin{equation} \label{group}
\big(\mathcal{U}(\mathit{a})\psi \big)(x) := \bigg\{ \sum_{r_{x'_{k'},x_k} \in \mathcal{R}_\mathit{a}(x_k)} U(r_{x'_{k'},x_k}) \psi_{k'}(x'_{k'}):\: k\in I^e \bigg\}^\mathrm{T}
\end{equation}
for those $x\in\Gamma$ and $\mathit{a}$ for which none of the points $x_k$ and $x'_k$ involved coincides with a vertex of $\Gamma$; note that for each $\mathit{a}\in\mathbb{R}$ this requirement excludes an at most countable subset of graph points. It is easy to check that these operators form a group, $\mathcal{U}(\mathit{a}) \mathcal{U}(\mathit{a})'= \mathcal{U}(\mathit{a}+\mathit{a}')$ for all $\mathit{a}, \mathit{a}'\in \mathbb{R}$. It is less obvious that the operator defined by (\ref{group}) is unitary. To check it let us first look how does the map
 $$
\psi_k \mapsto \sum_{r_{x'_{k'},x_k} \in \mathcal{R}_\mathit{a}(x_k)} U(r_{x'_{k'},x_k}) \psi_{k'}
 $$
act in the space $\ell^2(I^e)$. The set $\mathcal{R}_\mathit{a}(x)$ can be regarded as a `backward' tree with every route from `tip' $x'_{k'}$ to the `root' $x_k$ having length $\mathit{a}$. Since by construction the incoming and outgoing edges at every vertex are related by a unitary matrix, the norm is preserved and images of orthogonal elements remain orthogonal going `backward' through the tree. Consequently, we have
 $$
\bigg| \sum_{r_{x'_{k'},x_k} \in \mathcal{R}_\mathit{a}(x_k)} U(r_{x'_{k'},x_k}) \psi_{k'}(x'_{k'}) \bigg|^2 = |\psi_k(x_k)|^2
 $$
and contributions to $\mathcal{U}(\mathit{a})\psi$ coming from different edge components are mutually orthogonal. Since the integration over $x_k$ and $x'_{k'}$ on each edge is taken with respect to the same Lebesgue measure, it follows that $\|\mathcal{U}(\mathit{a})\psi\|^2 = \|\psi\|^2$ holds for any $\psi\in L^2(\Gamma)$ which we wanted to prove.

\smallskip

One has to keep in mind that the group depends on the choice of the operator $P_U$ and if there is need to stress this fact we should write its elements as $\mathcal{U}_U(\mathit{a})$. In general the group actions are shifts along the graph edges, however, in contrast to the trivial situation when $\Gamma$ is a line, vertex coupling plays a role, especially if there is a nontrivial branching, $d_j>2$. This returns us to the question of physical meaning of different self-adjoint extension we have touched briefly in Sec.~\ref{s:momentum}.

\begin{example}
To illustrate the action the group consider a \emph{star-shaped graph} consisting of an equal number $n$ of incoming and outgoing semi-infinite leads, $N^\pm_\infty=n\ge 2$, connected in a single vertex, in which the coupling is described by an $n\times n$ unitary matrix $U$. Take a vector $\psi$ on the first incoming lead identified with the interval $(-\infty,0]$ assuming that it has a compact support and denote $b:= - \inf\,\mathrm{supp}\,\psi$. If $\mathit{a}>b$ the function $\mathcal{U}(\mathit{a})\psi$ is supported on the outgoing leads only, its component on the $j$-th one being $U_{j1}\psi(\cdot-\mathit{a})$. In other words, the incoming wave packets splits into scaled copies of the original one with the weights which guarantee that the probability after the shift through the vertex\footnote{We avoid the term ``passing through'' to stress that no time evolution is involved here, at least as long as we think about our model in terms of standard quantum mechanics.} is preserved. If the initial wave packet is supported on more than a single edge, the resulting one is naturally a superposition of those coming from the involved incoming contributions.
\end{example}

More generally, it is easy to see that a state represented by a function of compact support will remain compactly supported if operators $\mathcal{U}(\mathit{a})$ are applied to it. This does not mean, however, that the support will keep its properties, in particular, that its Lebesgue measure should have a bound independent of $\mathit{a}$.

\begin{example}
Consider again the graph of Example~\ref{ex:loopline}. In the two situation mentioned there the group action is simple: for $U= {0\;1 \choose 1\;0}$ the operator $P_U$ is unitarily equivalent to the momentum operator on the line along which the corresponding $\mathcal{U} (\mathit{a})$ shifts function, while for $U=I$ the group decomposes into shifts along the line and cyclic motion on the loop. The situation is different if all the elements of $U$ are nonzero. Consider $\psi$ with the support on the incoming lead, $\mathrm{supp}\,\psi \subset \big( -\frac34\ell, -\frac14\ell \big)$ and apply $\mathcal{U}(n\ell)$ to it. It is not difficult to see that the loop component keeps the shape changing just the ``size'' being $\big(\mathcal{U}(\mathit{n\ell}) \psi\big)_2(x) = (U_{22})^{n-1} U_{21}\psi(x-\ell)$ while
 $$
\big(\mathcal{U}(\mathit{n\ell}) \psi\big)_1(x) = U_{11}\psi(x-n\ell) + \sum_{k=1}^{n-1} U_{12}(U_{22})^{n-1-k} U_{21}\psi(x-k\ell)
 $$
holds for the line part. In other words, the function is a linear combination of shifted copies of the original function with the number of components increasing with $n$, in particular, we have $\:\mathrm{diam}\, \mathrm{supp}\, \big(\mathcal{U}(\mathit{n\ell}) \psi\big)_1 \ge (n-1)\ell$.
\end{example}

The action of $\mathcal{U}(\mathit{a})$ on an indecomposable graph, even a finite one, can be more more complicated if such a $\Gamma$ has more than one loop and their lengths are incommensurable. Note also that a related example can be found in \cite{JPT12a} where, however, the model is interpreted in terms of Lax-Phillips scattering theory.

\section{A two-loop example}
\label{s:example}

To get another insight into spectral properties of the operators $P_U$, let us analyze one more simple example. The graph $\Gamma$ in this case will consists of two leads, $\Gamma_0$ and $\Gamma_4$ identified with the halflines $(-\infty,0]$ and $(]0,\infty)$, respectively, and three finite edges $\Gamma_j$ of lengths $\ell_j,\: j=1,2,3,$ connecting the endpoints of $\Gamma_0$ and $\Gamma_4$; the first two are oriented from $\Gamma_0$ to $\Gamma_4$ and the third one in the opposite direction. Such a graph is obviously balanced oriented so one can construct momentum operators on it; we choose the one for which the coupling at both vertices (coupling respectively the edges $(3,0)\to(1,2)$ and $(1,2)\to(3,4)$) is described by the same matrix,
\begin{equation}
U= U^*= \frac{1}{\sqrt{2}} {1\quad\; 1 \choose 1\; -\!1}\,.
\end{equation}
Looking for eigenfunctions, proper or generalized, of the momentum operator --- which we denote for the sake of simplicity $P_U$ again --- we use the Ans\"atze $\psi_j(x) = c_j\, \mathrm{e}^{-ikx}$ on the $k$-th edge, $j=0,\dots,4$.

Consider first the situation when the two ``right-looking'' edges are of the same length, $\ell_1=\ell_2$. Specifying the conditions \eqref{momentum-op-cond} to the present case and excluding the coefficients $c_1,\,c_2$ we arrive at the relation
 $$
\mathrm{e}^{-ik\ell_1} {c_3\choose c_4} = {c_3\,\mathrm{e}^{ik\ell_3} \choose c_0}\,.
 $$
It has two possible independent solutions. If $c_0=c_4$ we require $c_3= c_3\,\mathrm{e}^{ik\ell_3}$ which gives rise to an infinite series of eigenvalues,
\begin{equation} \label{ex:evs}
k = \frac{2\pi n}{\ell_1+\ell_3}\,, \quad n\in\mathbb{Z}\,;
\end{equation}
the coefficients for the remaining edges are $c_1=c_2 = \frac{1}{\sqrt{2}}\, c_3\, \mathrm{e}^{-ik\ell_1}$. On the other hand, if $c_3=0$ the solution exists for any $k\in \mathbb{R}$ with the coefficients $c_4= c_0\, \mathrm{e}^{ik\ell_1}$ and $c_1=-c_2 = \frac{1}{\sqrt{2}}\, c_0$. In other words, the spectrum of $P_U$ consists of absolutely continuous part covering the real axis and the series of embedded eigenvalues \eqref{ex:evs}.

This looks like the spectrum we have found in Example~\ref{ex:loopline}, however, there is an important difference. The operator $P_{U'}$ there was decomposable, hence its embedded point spectrum was in a sense trivial. Here it is not the case and still we have compactly supported eigenfunctions on a infinite graph. Moreover, the example can be modified by replacing the two semi-infinite leads by a single finite edge of length $\ell_4$ connecting the two vertices again. The solutions symmetric with respect to permutation of $\psi_j,\: j=1,2$, are as before, with eigenfunctions vanishing on the new edge. The antisymmetric ones, on the other hand, require the relations $c_0= c_4\, \mathrm{e}^{ik\ell_4} = c_0\, \mathrm{e}^{ik(\ell_1+\ell_4)}$ to be satisfied, giving rise to eigenvalues
$$
k = \frac{2\pi n}{\ell_1+\ell_4}\,, \quad n\in\mathbb{Z}\,,
$$
replacing the absolutely continuous spectral component of the original example. What is important is that they correspond to $c_3=0$, hence all the eigenfunctions in this case vanish at some edge of the modified, now compact graph. These examples allow us to make the following conclusion.

 \begin{proposition}
For momentum operators on a balanced graph the unique continuation principle is in general not valid.
 \end{proposition}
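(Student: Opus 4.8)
The plan is to extract the statement from the computations already carried out in Section~\ref{s:example} and to phrase them as a failure of unique continuation. By the unique continuation principle one means here the following: if $P_U\psi=k\psi$ and $\psi$ vanishes on some nonempty open subset of $\Gamma$, then $\psi\equiv 0$. Since on a single edge the equation $-\mathrm{i}\psi_k'=k\psi_k$ forces $\psi_k(x)=\psi_k(x_0)\,\mathrm{e}^{\mathrm{i}k(x-x_0)}$, an equivalent formulation is that an eigenfunction which vanishes identically on one edge must vanish on all of them; the principle thus acquires a nontrivial content only through the graph structure. Accordingly it suffices to exhibit one balanced graph, one momentum operator $P_U$ on it, and one $k$ for which there is a nonzero eigenfunction that is identically zero on a whole edge.

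First I would take the compact modification of the two-loop example described at the end of Section~\ref{s:example}: two vertices joined by the finite edges $\Gamma_1,\Gamma_2,\Gamma_3$ with $\ell_1=\ell_2$, together with a fourth finite edge $\Gamma_4$ of length $\ell_4$, the coupling at each vertex being given by the same matrix $U=\frac{1}{\sqrt{2}}{1\ \ 1\choose 1\ -1}$. As computed there, the antisymmetric Ansatz ($c_1=-c_2$ and $c_3=0$) together with the matching conditions \eqref{momentum-op-cond} reduces to $c_0=c_0\,\mathrm{e}^{\mathrm{i}k(\ell_1+\ell_4)}$, and hence produces, for every $n\in\mathbb{Z}$, a nontrivial solution at $k=\frac{2\pi n}{\ell_1+\ell_4}$. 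Since this graph is finite, every such generalized eigenfunction lies in $L^2(\Gamma)$; being piecewise smooth and satisfying the vertex relations it belongs to $D(P_U)$, and by Theorem~\ref{thm: specfin} (the spectrum being discrete) it is a genuine eigenfunction of the self-adjoint operator $P_U$. This $\psi$ is nonzero --- its components on $\Gamma_1,\Gamma_2,\Gamma_4$ do not vanish --- yet it is identically zero on the edge $\Gamma_3$, which is a nonempty open subset of $\Gamma$. This contradicts unique continuation and proves the proposition.

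It is worth recording, as a remark rather than part of the proof, that the phenomenon already appears on the original non-compact graph: the eigenfunctions associated with the eigenvalues \eqref{ex:evs} are supported on $\Gamma_1\cup\Gamma_2\cup\Gamma_3$ and so vanish on the open set $\Gamma_0\cup\Gamma_4$; what the compact variant adds is that the failure of the principle survives on a \emph{finite} graph and that the exceptional edge $\Gamma_3$ is an \emph{internal} one, so the effect cannot be blamed on the presence of leads. There is essentially no obstacle here beyond bookkeeping; the only point needing a little care is to confirm that the antisymmetric Ansatz with $c_3=0$ genuinely satisfies \eqref{momentum-op-cond} at \emph{both} vertices simultaneously, so that no vertex relation has been silently dropped, and that $P_U$ is self-adjoint --- which it is, the graph being balanced and $U$ block diagonal with the indicated unitary blocks. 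Both points are immediate from the explicit form of $U$ and the relations listed in Section~\ref{s:example}, so the proof amounts to assembling facts already established there.
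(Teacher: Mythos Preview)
Your proposal is correct and follows the same approach as the paper: the proposition is simply a reading of the explicit two-loop example (and its compact modification) in Section~\ref{s:example}, and you have extracted precisely the relevant facts --- the antisymmetric eigenfunctions on the compact graph vanish on $\Gamma_3$, and in the non-compact version the embedded eigenfunctions are supported away from the leads. This is exactly how the paper establishes the claim, with the proposition stated as an immediate consequence of the computations that precede it.
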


We note that a similar result holds for graph Laplacians. Here, however, the claim is less obvious; Laplacian eigenfunctions corresponding to positive eigenvalues are trigonometric functions and as such they can have zeros which for a proper choice of geometry may coincide with graph vertices \cite{EL}, while momentum operator eigenfunctions on an edge cannot vanish being of the form $c\, \mathrm{e}^{ikx}$.

Returning to the original example we note that the embedded spectrum emerged as the result of the assumption $\ell_1=\ell_2$. If it is not valid, the problem is more complicated. Writing down the coupling conditions and excluding $c_1,\,c_2$ we arrive at
$$
c_3 = c_0\, \frac{\mathrm{e}^{ik\ell_2} - \mathrm{e}^{ik\ell_1}}{\mathrm{e}^{ik\ell_3} \big(\mathrm{e}^{ik\ell_2} + \mathrm{e}^{ik\ell_1} \big) -2}
$$
together with the condition $c_0\, F\big(\mathrm{e}^{ik\ell_1}, \mathrm{e}^{ik\ell_2}, \mathrm{e}^{ik\ell_3}\big) = c_4$ where $F$ is a rational function with same denominator. Embedded eigenvalues correspond to the $k$'s for which the latter vanishes, hence such a spectrum is present under suitable commensurability relations between the edges lengths.

One can naturally also ask what happens if we change those parameters. In case of graph Laplacians we know that violation of rationality turns in general embedded eigenvalues into resonances. Let us look what we have here; for simplicity we consider the situation when $\ell_1=\ell_3=\ell$ and $\ell_2 = \ell+\delta$. The mentioned denominator vanishes when $4i\sin k\ell + \mathrm{e}^{ik\ell} \big( \mathrm{e}^{ik\delta} -1\big)$ does, hence for small $\delta$ it yields
\begin{equation} \label{ex:res}
k = \frac{\pi n}{\ell} \left( 1+ \frac{(-1)^n}{4i}\, \delta + \mathcal{O}(\delta^2) \right)
\end{equation}
in the vicinity of the eigenvalues \eqref{ex:evs}, i.e. complex ``resonance'' solutions; in distinction to Laplacian resonances one cannot localize them in one complex halfplane.

\bibliographystyle{amsalpha}

\end{document}